\documentclass{article}

\usepackage{PRIMEarxiv}

\usepackage[utf8]{inputenc}
\usepackage[T1]{fontenc}
\usepackage{hyperref}
\usepackage{url}
\usepackage{booktabs}
\usepackage{amsfonts}
\usepackage{nicefrac}
\usepackage{microtype}
\usepackage{graphicx}

\usepackage[numbers]{natbib}
\usepackage{hyperref}

\usepackage{booktabs}
\usepackage{tabularx}
\usepackage{longtable}
\usepackage{multirow}
\usepackage{array}
\usepackage{makecell}
\usepackage{rotating}

\newcolumntype{L}[1]{>{\raggedright\arraybackslash}p{#1}}
\newcolumntype{C}[1]{>{\centering\arraybackslash}p{#1}}

\usepackage{enumitem}
\usepackage{siunitx}
\usepackage{comment}
\usepackage{setspace}

\usepackage{graphicx}
\usepackage{tikz}
\usetikzlibrary{
    positioning,
    shapes.geometric,
    shapes.misc,
    arrows.meta,
    calc,
    mindmap,
    shadows,
    fit,
    backgrounds,
    patterns
}

\usepackage{pgfplots}
\pgfplotsset{compat=1.18}

\usepackage{algorithm}
\usepackage{algpseudocode}
\algrenewcommand\algorithmicrequire{\textbf{Input:}}
\algrenewcommand\algorithmicensure{\textbf{Output:}}

\definecolor{amber}{rgb}{0.75, 0.50, 0.00}
\definecolor{teal} {rgb}{0.00, 0.50, 0.50}

\definecolor{reviewerblue} {RGB}{0,70,140}
\definecolor{commentgray}  {RGB}{240,240,245}
\definecolor{responsegreen}{RGB}{0,100,60}

\definecolor{grpFormal}{RGB}{180,160,220}   % violet (light)
\definecolor{grpPlan}  {RGB}{140,210,185}   % teal   (light)
\definecolor{grpAgent} {RGB}{250,185,130}   % orange (light)
\definecolor{grpRun}   {RGB}{250,215,100}   % amber  (light)
\definecolor{grpBench} {RGB}{195,195,185}   % gray   (light)

\definecolor{ldblue}  {HTML}{1565C0}
\definecolor{trgreen} {HTML}{1B5E20}
\definecolor{cpurp}   {HTML}{4A148C}
\definecolor{esorg}   {HTML}{BF360C}
\definecolor{safegrn} {HTML}{2E7D32}
\definecolor{unsafrd} {HTML}{B71C1C}
\definecolor{cexbrn}  {HTML}{4E342E}
\definecolor{diagpnk} {HTML}{880E4F}
\definecolor{panelbg} {HTML}{F5F5F5}

\definecolor{figviolet}{RGB}{180,160,220}
\definecolor{figamber} {RGB}{250,185,100}
\definecolor{figteal}  {RGB}{180,225,210}
\definecolor{figblue}  {RGB}{180,210,240}
\definecolor{figgray}  {RGB}{220,220,215}

\definecolor{codebg}    {HTML}{F8F8F8}
\definecolor{codeframe} {HTML}{DDDDDD}
\definecolor{codegreen} {HTML}{2E7D32}
\definecolor{codeblue}  {HTML}{1565C0}
\definecolor{codegray}  {HTML}{757575}
\definecolor{codepurple}{HTML}{7B1FA2}

\usepackage{listings}
\lstdefinelanguage{XML}{
  morestring       = [b]",
  morestring       = [s]{>}{<},
  morecomment      = [s]{<!--}{-->},
  stringstyle      = \color{codepurple},
  identifierstyle  = \color{codeblue},
  keywordstyle     = \color{codegreen},
  morekeywords     = {localId,refLocalId,variable,negated,storage,
                      connectionPointIn,connection,leftPowerRail,
                      rightPowerRail,contact,coil,LD,rung}
}

\usepackage[acronym, nogroupskip, nonumberlist, nopostdot]{glossaries}
\glsdisablehyper

\newcommand{\doi}[1]{\url{https://doi.org/#1}}

\usepackage{siunitx}
\usepackage{amsmath}
\usepackage{amssymb}
\usepackage{amsthm}

\newtheoremstyle{nonitalic}
  {}{}                      % espaço antes, espaço depois
  {\normalfont}              % estilo do corpo — normal, não itálico
  {}                         % indentação
  {\bfseries}{.}              % estilo do título, pontuação
  { }                         % espaço após o título
  {}                          % especificação do cabeçalho (vazio = padrão)

\theoremstyle{nonitalic}
\newtheorem{definition}{Definition}[section]

\theoremstyle{nonitalic}
\newtheorem{theorem}{Theorem}[section]

\usepackage{xcolor}
\definecolor{codehl}{gray}{0.92}

\usepackage{soul}
\sethlcolor{codehl}
\newcommand{\code}[1]{\hl{\texttt{#1}}}
\newacronym{dfs}{DFS}{Depth-First Search}

\newacronym{aadl}{AADL}{Architecture Analysis and Design Language}
\newacronym{acm}{ACM}{Association for Computing Machinery}
\newacronym{ai}{AI}{Artificial Intelligence}
\newacronym{ait}{AIT}{Timing Analyzer}
\newacronym{ansi-c}{ANSI-C}{American National Standards Institute C}
\newacronym{api}{API}{Application Programming Interface}
\newacronym{arinc}{ARINC}{Aeronautical Radio, Incorporated}
\newacronym{ase}{ASE}{Automated Software Engineering}
\newacronym{asic}{ASIC}{Application-Specific Integrated Circuit}
\newacronym{ast}{AST}{Abstract Syntax Tree}
\newacronym{autosar}{AUTOSAR}{AUTomotive Open System ARchitecture}
\newacronym{bdd}{BDD}{Binary Decision Diagrams}
\newacronym{bmc}{BMC}{Bounded Model Checking}
\newacronym{capes}{CAPES}{Brazilian Federal Agency for Support and Evaluation of Graduate Education}
\newacronym{cbmc}{CBMC}{Bounded Model Checking for ANSI-C Programs}
\newacronym{cca}{CCA}{Confidential Compute Architecture}
\newacronym{cegar}{CEGAR}{Counterexample-Guided Abstraction Refinement}
\newacronym{cern}{CERN}{Conseil Européen pour la Recherche Nucléaire}
\newacronym{ceteli}{CETELI}{Research and Development Center in Electronic Information Technology}
\newacronym{cfg}{CFG}{Control Flow Graph}
\newacronym{chain-of-thought}{CoT}{Chain-of-Thought}
\newacronym{chc}{CHC}{Constrained Horn Clause}
\newacronym{cheri}{CHERI}{Capability Hardware Enhanced RISC Instructions}
\newacronym{cicd}{CI/CD}{Continuous Integration and Continuous Deployment}
\newacronym{cisq}{CISQ}{Consortium for Information and Software Quality}
\newacronym{cli}{CLI}{Command-Line Interface}
\newacronym{cordis}{CORDIS}{Community Research and Development Information Service}
\newacronym{cpu}{CPU}{Central Processing Unit}
\newacronym{cuda}{CUDA}{Compute Unified Device Architecture}
\newacronym{cve}{CVE}{Common Vulnerability and Exposure}
\newacronym{darpa}{DARPA}{Defense Advanced Research Projects Agency}
\newacronym{dblp}{DBLP}{Digital Bibliography \& Library Project}
\newacronym{defi}{DeFi}{Decentralized Finance}
\newacronym{do-178c}{DO-178C}{Software Considerations in Airborne Systems and Equipment Certification}
\newacronym{do-254}{DO-254}{Design Assurance for Airborne Electronic Hardware}
\newacronym{do-330}{DO-330}{Software Tool Qualification Considerations}
\newacronym{dpll}{DPLL}{Davis-Putnam-Logemann-Loveland}
\newacronym{dsl}{DSL}{Domain-Specific Language}
\newacronym{ecs}{ECS}{Ethereum Consensus Specification}
\newacronym{eda}{EDA}{Electronic Design Automation}
\newacronym{epsrc}{EPSRC}{Engineering and Physical Sciences Research Council}
\newacronym{esbmc}{ESBMC}{Efficient SMT-based Context-Bounded Model Checker}
\newacronym{evm}{EVM}{Ethereum Virtual Machine}
\newacronym{fase}{FASE}{Fundamental Approaches to Software Engineering}
\newacronym{fbd}{FBD}{Functional Block Diagram}
\newacronym{ffi}{FFI}{Foreign Function Interface}
\newacronym{fpga}{FPGA}{Field-Programmable Gate Array}
\newacronym{gdp}{GDP}{Gross Domestic Product}
\newacronym{gnat}{GNAT}{GNAT Ada compiler}
\newacronym{gpt-4}{GPT-4}{Generative Pre-trained Transformer 4}
\newacronym{gpu}{GPU}{graphical processing unit}
\newacronym{gsn}{GSN}{Goal Structuring Notation}
\newacronym{ic3}{IC3}{Incremental Construction of Inductive Clauses for Indubitable Correctness}
\newacronym{icse}{ICSE}{International Conference on Software Engineering}
\newacronym{ide}{IDE}{Integrated Development Environment}
\newacronym{iec}{IEC}{International Electrotechnical Commission}
\newacronym{ieee}{IEEE}{Institute of Electrical and Electronics Engineers}
\newacronym{ikos}{IKOS}{Inference Kernel for Open Static Analyzers}
\newacronym{iot}{IoT}{Internet of Things}
\newacronym{ir}{IR}{Intermediate Representation}
\newacronym{iso}{ISO}{International Organization for Standardization}
\newacronym{issta}{ISSTA}{International Symposium on Software Testing and Analysis}
\newacronym{jvm}{JVM}{Java Virtual Machine}
\newacronym{ld}{LD}{Ladder Diagram}
\newacronym{llb}{LLB}{Ladder Logic Bombs}
\newacronym{llm}{LLM}{Large Language Model}
\newacronym{mcas}{MCAS}{Maneuvering Characteristics Augmentation System}
\newacronym{mcdc}{MCDC}{Modified Condition/Decision Coverage}
\newacronym{mir}{MIR}{Mid-level Intermediate Representation}
\newacronym{musl}{MUSL}{musl C standard library}
\newacronym{nasa}{NASA}{National Aeronautics and Space Administration}
\newacronym{nhtsa}{NHTSA}{National Highway Traffic Safety Administration}
\newacronym{nist}{NIST}{National Institute of Standards and Technology}
\newacronym{pdr}{PDR}{Property Directed Reachability}
\newacronym{plc}{PLC}{Programmable Logic Controller}
\newacronym{por}{POR}{Partial Order Reduction}
\newacronym{ppgee}{PPGEE}{Graduate Program in Electrical Engineering}
\newacronym{propesp}{PROPESP}{Office of the Vice-Rector for Research and Graduate Studies}
\newacronym{raii}{RAII}{Resource Acquisition Is Initialization}
\newacronym{rmm}{RMM}{Realm Management Monitor}
\newacronym{rtl}{RTL}{Register Transfer Level}
\newacronym{rtos}{RTOS}{Real-Time Operating System}
\newacronym{sac}{SAC}{Symposium on Applied Computing}
\newacronym{sat}{SAT}{Boolean Satisfiability}
\newacronym{sbseg}{SBSeg}{Brazilian Symposium on Cybersecurity}
\newacronym{slr}{SLR}{Systematic Literature Review}
\newacronym{sme}{SME}{Small and Medium-sized Enterprise}
\newacronym{smt}{SMT}{Satisfiability Modulo Theories}
\newacronym{smtlib2}{SMT-LIB}{Satisfiability Modulo Theories Library}
\newacronym{soc}{SoC}{System on Chip}
\newacronym{sos}{SOS}{Structural Operational Semantics}
\newacronym{sri}{SRI}{Stanford Research Institute}
\newacronym{ssa}{SSA}{Static Single Assignment}
\newacronym{ssvlab}{SSVLab}{Systems and Software Verification LAB}
\newacronym{st}{ST}{Structured Text}
\newacronym{sttt}{STTT}{Software Tools for Technology Transfer}
\newacronym{svcomp}{SV-COMP}{Competition on Software Verification}
\newacronym{tacas}{TACAS}{Tools and Algorithms for the Construction and Analysis of Systems}
\newacronym{tdd}{TDD}{Test-Driven Development}
\newacronym{test-comp}{Test-Comp}{Competition on Software Testing}
\newacronym{tse}{TSE}{Transactions on Software Engineering}
\newacronym{tvl}{TVL}{Total Value lLocked}
\newacronym{ufam}{UFAM}{Federal University of Amazonas}
\newacronym{ukri}{UKRI}{UK Research and Innovation}
\newacronym{vhdl}{VHDL}{VHSIC Hardware Description Language}
\newacronym{wcet}{WCET}{Worst-Case Execution Time}
\newacronym{il}{IL}{Instruction List}
\newacronym{sfc}{SFC}{Sequential Function Chart}
\newacronym{ics}{ICS}{Industrial Control Systems}

\newacronym{ltl}{LTL}{Linear Temporal Logic}
\newacronym{ctl}{CTL}{Computation Tree Logic}

\newacronym{scl}{SCL}{Structured Control Language}
\newacronym{stl}{STL}{Statement List}

\newacronym{dag}{DAG}{Directed Acyclic Graph}

\usepackage{xcolor}
\title{ESBMC-GraphPLC: Formal Verification of Graphical PLCopen XML Ladder Diagram Programs Using SMT-Based Model Checking}

\author{
  Pierre Dantas \\
  Computer Science, The University of Manchester \\
  Manchester, UK \\
  \texttt{pierre.dantas@manchester.ac.uk} \\
  \And
  Lucas Cordeiro \\
  Computer Science, The University of Manchester \\
  Manchester, UK \\
  \texttt{lucas.cordeiro@manchester.ac.uk} \\
  \And
  Waldir Junior \\
  Electrical Engineering, Federal University of Amazonas (UFAM) \\
  Manaus, AM, Brazil \\
  \texttt{waldirjr@ufam.edu.br} \\
}

\begin{document}
\maketitle
\glsresetall

\begin{abstract}
PLCopen XML (\code{tc6\_0201}) defines two encoding formats for \gls{iec}~61131-3 Ladder Diagram programs: a \emph{textual} format using explicit \code{<rung>} elements, and a \emph{graphical} format that encodes rung logic as a directed graph of \code{localId}/\code{refLocalId} connections. ESBMC-PLC supported the textual format and identified the graphical format as a known gap: files exported by CONTROLLINO, Beremiz, and OpenPLC Editor were parsed but produced an empty GOTO intermediate representation, causing verification to succeed \emph{vacuously} -- all properties held trivially because no rung logic was generated and all variables remained zero-initialized. This paper presents \textbf{ESBMC-GraphPLC}, which closes this gap by introducing a \gls{dfs}-based graphical \gls{ld} resolver. The resolver traverses the \code{localId}/\code{refLocalId} connection graph from \code{leftPowerRail} to each coil, extracts rung paths as Boolean conjunctions of contacts, and applies a three-tier I/O inference scheme (\%-address-based, then heuristic contact/coil-only analysis). A critical implementation insight -- ordering coils by their \code{rightPowerRail} \code{connectionPointIn} sequence -- ensures that SET coils are processed before RESET coils, matching IEC~61131-3 scan-cycle semantics. The graphical-to-IR conversion requires no changes to the ESBMC backend: the GOTO IR encoder, \textit{k}-induction engine, Z3 SMT solver, and property encoder are unchanged. ESBMC-GraphPLC is validated on 3~graphical \gls{ld} programs from CONTROLLINO/OpenPLC~Editor exports. All 3~programs produce a full GOTO \gls{ir} with nondeterministic inputs and rung logic, whereas the previous work produced an empty \gls{ir}. All 3~programs verify SAFE at \textit{k}=2 in under \SI{70}{\milli\second}. The 11~textual \gls{ld} benchmarks from ESBMC-PLC that exist in genuine textual format are fully preserved: 11/11 produce identical results with zero regressions. Benchmark curation also surfaced two Beremiz example programs with no Ladder Diagram content or with timer-block semantics not yet preserved by the resolver; both are reported as discovered limitations rather than counted as successes. The complete artefact is archived at Zenodo~\cite{DantasCordeiro2026graphical} (\doi{10.5281/zenodo.20699856}).
\end{abstract}

\keywords{\acrshort{plc}, Ladder Diagram, \acrshort{ld}, IEC~61131-3, \acrshort{esbmc}, \acrshort{bmc}, \textit{k}-induction, \acrshort{smt}-Based Verification, Formal Methods, PLCopen XML, Graphical LD, tc6\_0201, DFS, ESBMC-PLC}

\glsresetall

%=============================================================================
\section{Introduction}
\label{sec:intro}
%=============================================================================

\glspl{plc} govern safety-critical processes in nuclear power stations, water treatment plants, chemical refineries, and railway signalling systems. The dominant \gls{plc} programming notation -- \gls{ld}, standardised in \mbox{\gls{iec}~61131-3} -- is exchanged between development tools in PLCopen XML format (\code{tc6\_0201}). This standard defines \emph{two} distinct encoding formats for the same logical content:

\textbf{Textual \gls{ld}} uses explicit \code{<rung>} elements in which contacts, coils, and function blocks are direct children in evaluation order, as shown in Listing~\ref{lst:textual-xml}.

\begin{lstlisting}[language=XML, caption={Textual PLCopen XML rung}, label={lst:textual-xml}]
<rung>
  <contact variable="Pool_Low"/>
  <contact variable="Tank_High" negated="true"/>
  <coil variable="Water_Pump"/>
</rung>
\end{lstlisting}

\textbf{Graphical \gls{ld}} encodes rung topology as a directed graph. Each element carries a \code{localId}, and connectivity is specified via \code{refLocalId} references nested inside \code{connectionPointIn} children, as shown in Listing~\ref{lst:graphical-xml}.

\begin{lstlisting}[language=XML, caption={Graphical PLCopen XML (\code{tc6\_0201}) -- same logic as Listing~\ref{lst:textual-xml}}, label={lst:graphical-xml}]
<contact localId="3">
  <connectionPointIn>
    <connection refLocalId="9"/>
  </connectionPointIn>
  <variable>Pool_Low_Level_Sensor</variable>
</contact>
<coil localId="4" storage="set">
  <connectionPointIn>
    <connection refLocalId="6"/>
    <connection refLocalId="12"/>
  </connectionPointIn>
  <variable>Water_Pump</variable>
</coil>
\end{lstlisting}

Graphical \gls{ld} is the native export format of the CONTROLLINO/OpenPLC Editor, Beremiz \gls{ide}, and most commercial \gls{plc} development environments. The textual format is a minority encoding used primarily in legacy files and hand-authored XML.

\subsection{The Gap: Vacuous Verification of Graphical \gls{ld}}

ESBMC-PLC~\cite{DantasCordeiro2026artefact} introduced the first open-source formal verifier with native support for textual PLCopen XML \gls{ld}. The tool's parser accepted graphical \code{tc6\_0201} files without error. Still, the \gls{ld}-to-GOTO-\gls{ir} converter found no \code{<rung>} elements, emitted no rung assignment instructions, and left all variables at their zero-initialized defaults. The resulting GOTO program is shown in Listing~\ref{lst:empty-ir}.

\begin{lstlisting}[caption={GOTO IR emitted by ESBMC-PLC for a graphical LD program (schematic)}, label={lst:empty-ir}]
/* All variables zero-initialised; no rung logic emitted */
static bool Water_Pump = false;
while (1) {
    /* EMPTY -- no rung assignments */
    assert(!Water_Pump || Pool_Low); /* trivially true: both false */
}
\end{lstlisting}

Every property asserted over zero-initialized variables trivially holds: \code{ESBMC} found no violation, returned \code{SAFE}, and \textit{k}-induction converged at $k=1$ with no inductive content. The proofs were \emph{vacuous} -- sound in the degenerate sense that an empty program satisfies any invariant, but providing no guarantee about the actual \gls{plc} logic.

This gap was documented in ESBMC-PLC as a known limitation~\cite[Sec.~9.1]{DantasCordeiro2026artefact}: ``\textit{the \gls{ld}$\to$GOTO-\gls{ir} converter does not yet emit rung logic for graphical connections; completing the graphical-to-\gls{ir} converter is planned for the immediate future.}'' The present paper delivers that completion.

\subsection{Contributions}

    This paper presents \textbf{ESBMC-GraphPLC}~\cite{ESBMCpr5374, DantasCordeiro2026graphical}, which extends ESBMC-PLC to support the graphical PLCopen XML format. Concretely:

\begin{enumerate}[leftmargin=2em]
  \item \textbf{Formal model of graphical \gls{ld}} (\S\ref{sec:graph-model}): a directed-graph model of the \code{tc6\_0201} connection structure, with formal definitions of rung paths and coil energization conditions, and a soundness theorem connecting \gls{dfs}-extracted paths to IEC~61131-3 scan-cycle semantics.

  \item \textbf{\gls{dfs}-based rung extractor} (\S\ref{sec:algorithm}): Algorithm~\ref{alg:graphical} traverses the \code{localId}/\code{refLocalId} graph from \code{leftPowerRail} to each coil, collects all contact sequences along each path, and emits the corresponding rung expression. A critical ordering insight -- using the \code{rightPowerRail} \code{connectionPointIn} sequence to determine coil execution order -- ensures SET coils are processed before RESET coils, matching IEC~61131-3 latch semantics.

  \item \textbf{Three-tier I/O inference} (\S\ref{sec:algorithm}): \%-address-based inference (\code{\%IX*} $\to$ input, \code{\%QX*} $\to$ output) as the primary tier, followed by heuristic contact/coil-only analysis as a fallback for address-free programs.

  \item \textbf{Zero backend changes} (\S\ref{sec:implementation}): the GOTO~\gls{ir} encoder, \textit{k}-induction engine, Z3 \gls{smt} solver, and property encoder from ESBMC-PLC are unchanged. The entire contribution is confined to \code{plcopen\_xml\_parser.cpp} (+274~lines).

  \item \textbf{Evaluation on 3 graphical \gls{ld} programs} (\S\ref{sec:experiments}): spanning two vendor tools (CONTROLLINO and OpenPLC examples). All 3~programs produce a full GOTO~\gls{ir} after this work (vs.\ empty \gls{ir} before); all verify SAFE at $k=2$ in under \SI{70}{\milli\second}; the 11 ESBMC-PLC textual benchmarks that exist as genuine textual-format files are fully preserved with zero regressions. Two further Beremiz example programs were investigated during benchmark curation and excluded from the main evaluation after revealing genuine limitations, which we report transparently in \S\ref{sec:threats} rather than count as successes.
\end{enumerate}

\subsection{Scope}
ESBMC-GraphPLC targets graphical PLCopen XML programs using contacts, coils (including SET/RESET latches), and address-based or heuristic I/O declarations. Function-block nodes (timers, counters, arithmetic comparators) encountered within graphical \gls{ld} rung paths are traversed by the \gls{dfs} but are not represented in the emitted Boolean expression; their outputs are treated as nondeterministic terms (a sound over-approximation) or, if they collapse the energization condition to a constant, reported as a limitation (see \S\ref{sec:threats}). The construct limitations of ESBMC-PLC (no REAL types, no arrays, no multi-POU) carry forward unchanged.

Figure~\ref{fig:architecture} summarises the architecture of ESBMC-GraphPLC. The contribution of this paper is the Graphical LD Resolver \& Rung Extractor module shown in the lower-left of the figure, which closes the gap described in Section~\ref{sec:intro} by traversing the connection graph, extracting equivalent rung expressions, and feeding them into the existing LD-to-GOTO-IR translation path unchanged. No part of the verification backend -- GOTO-IR generation, the BMC and \textit{k}-induction engines, or the Z3 solver -- requires modification.

\begin{figure}[htbp]
    \centering
    \includegraphics[width=0.85\linewidth]{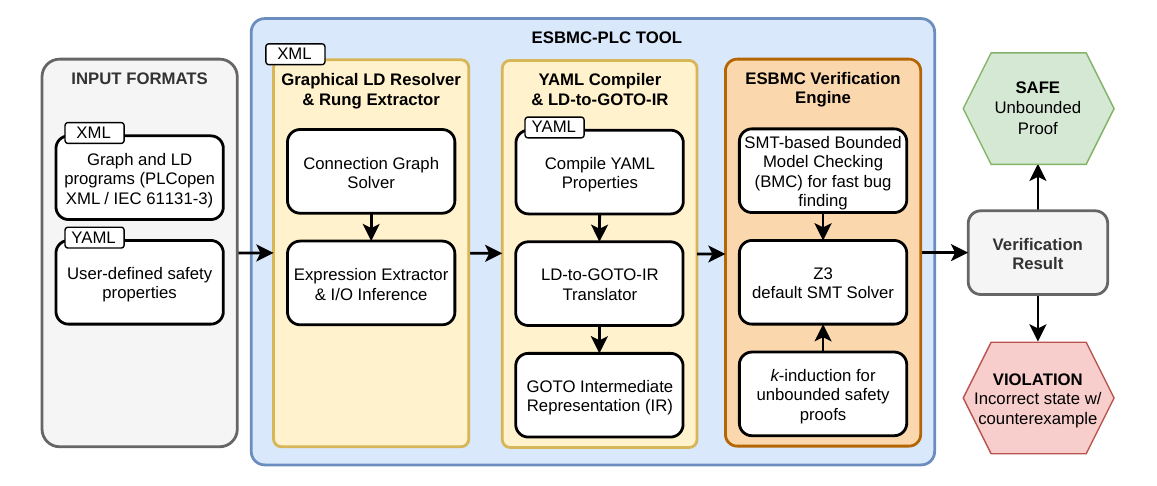}
    \caption{ESBMC-GraphPLC architecture}    
    \label{fig:architecture}
\end{figure}

The remainder of this paper is structured as follows. Section~\ref{sec:background} reviews background on \gls{plc} architecture and the two PLCopen XML formats. Section~\ref{sec:related} discusses related work. Section~\ref{sec:graph-model} formalizes the graphical \gls{ld} connection graph. Section~\ref{sec:algorithm} presents the \gls{dfs} algorithm. Section~\ref{sec:implementation} describes the implementation. Section~\ref{sec:experiments} reports the evaluation. Section~\ref{sec:discussion} discusses findings. Section~\ref{sec:threats} addresses threats to validity. Section~\ref{sec:future} outlines future directions. Section~\ref{sec:conclusion} concludes.

%=============================================================================
\section{Background}
\label{sec:background}
%=============================================================================

\subsection{\gls{plc} Architecture and the Scan Cycle}

A \gls{plc} consists of a \gls{cpu}, non-volatile program memory, input modules (reading digital/analog sensor signals), and output modules (driving actuators). Execution follows a fixed \emph{scan cycle}: (1)~read all inputs into a process image, (2)~execute the user program from first rung to last, updating internal variables and the output image, (3)~write the output image to physical actuators, (4)~handle communications, and (5)~repeat. Cycle times typically range from \SIrange{1}{100}{\milli\second}.

ESBMC-PLC encodes this deterministic cyclic model as a \code{while(true)} loop: input variables are re-sampled nondeterministically at each iteration (open-world sensor model), output coils are persistent static variables (output-image semantics), and safety properties are injected as \code{assert()} statements in the loop body. \textit{k}-induction then proves invariants for all future scan counts.

\subsection{\mbox{\gls{iec}~61131-3} Ladder Diagram}

\gls{ld} is the most widely deployed \gls{plc} language due to its relay-diagram notation familiar to automation engineers~\cite{Weiss2021}. A Ladder program consists of \emph{rungs}: each rung evaluates a Boolean combination of \emph{contacts} (normally-open XIC and normally-closed XIO) and assigns the result to one or more \emph{coils} (output-enable OTE, set OTL, reset OTU). Timer and counter function blocks appear as specialized rung elements.

\subsection{PLCopen XML: Textual vs.\ Graphical Formats}
\gls{ld} programs are exchanged between tools in PLCopen XML format (\code{tc6\_0201})~\cite{PLCopen2021}. This standard permits two representations of the same \gls{ld} logic:

\textbf{Textual format} encodes rungs explicitly. Each \code{<rung>} element contains child elements (\code{<contact>}, \code{<coil>}, \code{<block>}) in the order they are evaluated. This format is directly parseable: the evaluation sequence is the document order.

\textbf{Graphical format} encodes the visual layout of the \gls{ld} diagram. Elements appear as siblings under the \code{<LD>} element, each with a unique \code{localId}. Connectivity is specified via \code{refLocalId} references inside \code{connectionPointIn} children. The rung structure must be \emph{reconstructed} from this directed graph. Critically, graphical \gls{ld} also uses \code{leftPowerRail} and \code{rightPowerRail} elements to mark the left and right bus bars of the Ladder diagram; rung paths run from the left rail through contacts to coils that connect to the right rail.

Table~\ref{tab:format-compare} summarises the key differences. The graphical format is the native output of all major \gls{plc} development environments tested in this work.

\begin{table}[htbp]
\centering
\small
\caption{Textual vs.\ graphical PLCopen XML format}
\label{tab:format-compare}
\begin{tabular}{lll}
\toprule
\textbf{Dimension} & \textbf{Textual} & \textbf{Graphical} \\
\midrule
Rung structure    & Explicit \code{<rung>} elements & Must be reconstructed via \gls{dfs} \\
Evaluation order  & Document order   & \code{rightPowerRail} sequence \\
I/O declaration   & \code{<variable>} with type & \code{<variable>} with \code{\%IX}/\code{\%QX} address \\
Bus bars          & Implicit         & \code{leftPowerRail}/\code{rightPowerRail} elements \\
Vendor examples   & Legacy, hand-authored & CONTROLLINO, Beremiz, OpenPLC Editor \\
ESBMC-PLC    & \textbf{Supported}   & Parsed, \gls{ir} empty (vacuous) \\
ESBMC-GraphPLC    & Supported        & \textbf{Supported (this work)} \\
\bottomrule
\end{tabular}
\end{table}

\subsection{\gls{esbmc}: Architecture and Verification Engine}

\gls{esbmc}~\cite{menezes2024,gadelha2020} is a multi-language formal verifier whose internal architecture centers on the GOTO-program \gls{ir}: a simplified three-address code form annotated with assumptions and assertions. ESBMC-PLC is implemented as an \gls{ld} frontend for \gls{esbmc}: it translates PLCopen XML to GOTO programs, which the \gls{esbmc} backend verifies using \gls{smt}-based incremental \gls{bmc} or \textit{k}-induction with Z3.

%=============================================================================
\section{Related Work}
\label{sec:related}
%=============================================================================

Formal verification of \gls{plc} programs has been addressed from several directions. Why3-based deductive pipelines translate \gls{ld} to WhyML for proof-obligation discharge~\cite{BeloLourenco2021, BeloLourenco2022}. NuSMV-based approaches target \gls{llb} detection via \gls{ctl} model checking~\cite{Iacobelli2024}. The Nozomi patent~\cite{Bruttomesso2024} maps \gls{ld} to \gls{smt} circuits but is proprietary and does not describe graphical format handling. PLCverif~\cite{LopezMiguel2022, LopezMiguel2025} translates Siemens \gls{scl} to \gls{cbmc} and does not accept \gls{ld} input. \gls{smt}-based verification of \gls{st} programs via Maude-\gls{smt}~\cite{Lee2022, Lee2024, Lee2025} targets the textual Structured Text language. K-framework semantics for \gls{st} exist~\cite{Wang2023}, but no equivalent formal semantics for graphical \gls{ld} has been published.

\textbf{Graphical \code{tc6\_0201} support:} none of the above tools resolve the \code{localId}/\code{refLocalId} connection graph to extract rung logic. IEC-Checker~\cite{Suvorov2021} performs static analysis on PLCopen XML but does not verify safety properties or support graphical \gls{ld}. To the best of our knowledge, ESBMC-GraphPLC is the first formal verifier to resolve graphical PLCopen XML connection graphs and produce sound GOTO~\gls{ir} for \gls{smt}-based verification.

ESBMC-PLC~\cite{DantasCordeiro2026artefact} is the direct predecessor of this work, providing the textual \gls{ld} frontend, GOTO \gls{ir} encoding, YAML property language, and \textit{k}-induction backend that this paper builds upon without modification.

Table~\ref{tab:related} summarises the key dimensions across these approaches and ESBMC-GraphPLC.

\begin{table}[htbp]
\centering
\small
\caption{Comparison of formal verification approaches for \gls{plc} programs}
\label{tab:related}
\begin{tabular}{lllcc}
\toprule
\textbf{Tool / Approach} & \textbf{Input} & \textbf{Backend} & \textbf{Graphical \gls{ld}} & \textbf{Sound} \\
\midrule
BeloLourenco~\cite{BeloLourenco2021,BeloLourenco2022}  & Textual \gls{ld}        & Why3 deductive        & No  & Yes \\
Iacobelli~\cite{Iacobelli2024}                          & \gls{ld}                & NuSMV (\gls{ctl})     & No  & Yes \\
Nozomi~\cite{Bruttomesso2024}                           & \gls{ld}                & \gls{smt} circuits    & N/A\textsuperscript{a} & N/A\textsuperscript{a} \\
PLCverif~\cite{LopezMiguel2022,LopezMiguel2025}         & Siemens SCL             & \gls{cbmc} (\gls{bmc})& No  & Yes \\
Lee et al.~\cite{Lee2022,Lee2024,Lee2025}               & \gls{st}                & Maude-\gls{smt}       & No  & Yes \\
Wang et al.~\cite{Wang2023}                             & \gls{st}                & K-framework           & No  & Yes \\
IEC-Checker~\cite{Suvorov2021}                          & PLCopen XML             & Static analysis       & Partial\textsuperscript{b} & N/A \\
ESBMC-PLC~\cite{DantasCordeiro2026artefact}             & Textual PLCopen XML     & \textit{k}-induction/Z3 & No (vacuous) & Yes \\
\textbf{ESBMC-GraphPLC (this work)}                   & \textbf{Textual \emph{\&} graphical PLCopen XML} & \textit{k}-induction/Z3 & \textbf{Yes} & \textbf{Yes} \\
\bottomrule
\multicolumn{5}{l}{\textsuperscript{a}Proprietary patent; graphical format handling not described.}\\
\multicolumn{5}{l}{\textsuperscript{b}Parses \code{tc6\_0201} but does not verify safety properties.}
\end{tabular}
\end{table}

%=============================================================================
\section{The Graphical \gls{ld} Connection Graph}
\label{sec:graph-model}
%=============================================================================

\subsection{Informal Description}

In graphical PLCopen XML, the \code{<LD>} network element contains a flat list of sibling elements: one \code{<leftPowerRail>}, one or more \code{<rightPowerRail>} elements, \code{<contact>} elements, \code{<coil>} elements, and optionally \code{<block>} elements for function blocks. Each element carries a \code{localId} attribute (a small integer unique within the network). Connectivity is expressed by \code{refLocalId} attributes: if element $B$ has \code{refLocalId="A"} inside its \code{connectionPointIn}, then $A$ feeds $B$ -- current flows from $A$ to $B$.

A rung path runs from the \code{leftPowerRail} through zero or more contacts to a coil, which connects to the \code{rightPowerRail}. Multiple contacts in series correspond to AND logic; parallel paths to the same coil correspond to OR logic.

\subsection{Formal Definitions}

\begin{definition}[\gls{ld} Connection Graph]
\label{def:graph}
Let $G = (V, E)$ where
\begin{align*}
  V &= \{\code{leftRail}\} \cup \text{Contacts} \cup \text{Coils} \cup \{\code{rightRail}\}, \\
  E &= \{(u, v) \mid v.\code{connectionPointIn} \text{ contains an entry } \code{refLocalId} = u.\code{localId}\}.
\end{align*}

An edge $(u,v) \in E$ denotes that current flows from $u$ into $v$. Each node $v \in V$ carries: $\text{tag}(v) \in \{\code{leftPowerRail}, \code{contact}, \code{coil}, \code{rightPowerRail}\}$; $\text{var}(v)$, the associated variable name (contacts and coils only); $\text{neg}(v) \in \{\text{true}, \text{false}\}$, whether a contact is normally-closed; and $\text{stor}(v) \in \{\code{none}, \code{set}, \code{reset}\}$, the storage kind of a coil. We assume $G$ is a finite \gls{dag}: well-formed PLCopen~XML graphical \gls{ld} bodies do not encode feedback loops between rails, and our parser does not verify this assumption at parse time. If the input graph contains a cycle, the \gls{dfs} in Algorithm~\ref{alg:graphical} still terminates (visited nodes are not revisited), but the resulting rung set may be incomplete; the algorithm is correct only for well-formed acyclic inputs, as is the case for all 3 vendor-exported programs in our evaluation.

\end{definition}

\begin{definition}[Rung Path]
\label{def:rungpath}
A \emph{rung path} is a simple directed path
$p = \langle r, c_1, c_2, \ldots, c_n, q \rangle$ in $G$ where $r$ has $\text{tag}(r) = \code{leftPowerRail}$, each $c_i \in \text{Contacts}$, and $q \in \text{Coils}$.
\end{definition}

\begin{definition}[Rung Expression]
\label{def:expr}
For a rung path $p = \langle r, c_1, \ldots, c_n, q \rangle$, the \emph{rung expression} is expressed in Equation~\eqref{eq:rung_expr}.
\begin{equation}\label{eq:rung_expr}
  \text{expr}(p) \;=\; \bigwedge_{i \,:\, \neg\text{neg}(c_i)} \text{var}(c_i)
  \;\;\wedge\;\;
  \bigwedge_{i \,:\, \text{neg}(c_i)} \neg\,\text{var}(c_i).
\end{equation}
\end{definition}

\begin{definition}[Path Set of a Coil]
\label{def:pathset}
For a coil $q$, let $\text{Paths}(q)$ denote the set of all rung paths ending at $q$, and define the \emph{combined energization condition} in Equation~\eqref{eq:combiner_energ}.
\begin{equation}\label{eq:combiner_energ}
  \text{energy}(q) \;=\; \bigvee_{p \,\in\, \text{Paths}(q)} \text{expr}(p).
\end{equation}
\end{definition}

\begin{definition}[Coil Update Rule]
\label{def:update}
Let $q^{(k)}$ denote the value of coil $q$ at scan step $k$. The next state value $q^{(k+1)}$ depends on $\text{stor}(q)$, as shown in Equation~\eqref{eq:coil_update_rule}.
\begin{equation}\label{eq:coil_update_rule}
  q^{(k+1)} \;=\;
  \begin{cases}
    \text{energy}(q) & \text{if } \text{stor}(q) = \code{none}, \\[2pt]
    q^{(k)} \,\vee\, \text{energy}(q) & \text{if } \text{stor}(q) = \code{set}, \\[2pt]
    q^{(k)} \,\wedge\, \neg\,\text{energy}(q) & \text{if } \text{stor}(q) = \code{reset}.
  \end{cases}
\end{equation}
When a variable is driven by both a \code{set-coil} and a \code{reset-coil} in the same network, the two update rules are applied in sequence within the scan, in the order the coils appear in the \code{rightPowerRail}'s \code{connectionPointIn} list (Section~\ref{sec:ordering}).
\end{definition}

\begin{theorem}[Soundness]
\label{thm:soundness}
For a coil $q$ with $\text{stor}(q) = \code{none}$, $\text{energy}(q)$ as given by Definition~\ref{def:pathset} is equivalent to the energization condition of $q$ under \mbox{\gls{iec}~61131-3} series/parallel combinational rung semantics. For a coil pair with $\text{stor}(q) \in \{\code{set}, \code{reset}\}$, the update rule of Definition~\ref{def:update}, applied in \code{rightPowerRail} order, correctly implements latching coil semantics: $q^{(k+1)} = \text{true}$ whenever the SET path was energised at or before step $k$. No RESET path has been energized since then.
\end{theorem}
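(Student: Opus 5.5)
The proof splits into the two clauses of Theorem~\ref{thm:soundness}: a combinational clause for coils with $\text{stor}(q)=\code{none}$, and a latching clause for SET/RESET coil pairs. For the first clause I would fix the reference semantics explicitly, since IEC~61131-3 states it only informally. The energization of a node $v$ in a series/parallel network is defined recursively over the \gls{dag} $G$ of Definition~\ref{def:graph}:
\[
\text{pow}(\code{leftRail}) = \text{true}, \qquad
\text{pow}(v) = \text{lit}(v) \wedge \bigvee_{(u,v)\in E} \text{pow}(u).
\]
Here $\text{lit}(v)$ is $\text{var}(v)$ or $\neg\text{var}(v)$ according to $\text{neg}(v)$ when $v$ is a contact, and $\text{true}$ when $v$ is a coil. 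This matches the standard reading: series connection is AND, and wires joining into a single \code{connectionPointIn} are OR. The recursion is well founded because $G$ is acyclic.

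The core lemma, proved by induction on the topological depth of $v$, is
\[
\text{pow}(v) \equiv \bigvee_{p \in \text{Paths}(v)} \bigwedge_{c \in p} \text{lit}(c),
\]
where $\text{Paths}(v)$ now means all simple paths from \code{leftRail} to $v$ through contacts. In the inductive step I unfold $\text{pow}(v)$ using the induction hypothesis for each predecessor $u$. I then distribute $\text{lit}(v)$ over the disjunction. Finally I observe that every path to $v$ decomposes uniquely as a path to some predecessor $u$ extended by the edge $(u,v)$, and in a \gls{dag} each such path is automatically simple. Instantiating $v=q$ gives exactly $\text{energy}(q)$ of Definition~\ref{def:pathset}, with $\text{expr}(p)$ as in Definition~\ref{def:expr}. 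I would also record here the step linking this lemma to the algorithm: under acyclicity, the \gls{dfs} enumerates every such path. The concern is the visited-set caveat. If visited marking is per path, as in backtracking, enumeration is complete; if it is global, shared prefixes could be pruned, and this has to be addressed.

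For the latching clause I fix a variable $x$ driven by a set coil $q_s$ and a reset coil $q_r$, applied in \code{rightPowerRail} order (set before reset, per Section~\ref{sec:ordering}). Composing the two cases of Definition~\ref{def:update} gives
\[
x^{(k+1)} = \bigl(x^{(k)} \vee \text{energy}(q_s)^{(k)}\bigr) \wedge \neg\,\text{energy}(q_r)^{(k)}.
\]
The claimed characterization is that $x^{(k+1)}$ holds iff there exists $j \le k$ with $\text{energy}(q_s)^{(j)}$ true and $\text{energy}(q_r)^{(i)}$ false for all $j \le i \le k$, assuming $x^{(0)}=\text{false}$. I would prove this by induction on $k$ by unrolling the recurrence. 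The base case $k=0$ is immediate. In the step case, the conjunct $\neg\,\text{energy}(q_r)^{(k)}$ extends the "no reset since" window by one, and the disjunction with $\text{energy}(q_s)^{(k)}$ allows $j=k$.

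The main obstacle is not the algebra but the modelling gap. The theorem is stated against informal IEC semantics, so the proof is only as strong as the chosen formalization of $\text{pow}$. I would state that formalization as a definition and argue that it is faithful for contact/coil-only networks. A second subtlety is simultaneous SET and RESET energization in the same scan. Under the order above, reset wins, which is consistent with the "no RESET path energized since then" wording when the window includes step $k$ itself. The proof should state this tie-break explicitly and note that it depends on the \code{rightPowerRail} order; if the order were reversed, set would dominate and the characterization would change.
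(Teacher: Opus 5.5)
Your proposal is correct. It uses the same two-part split as the paper (a path-sum for plain coils, and set-then-reset composition for latches), but you argue both halves differently.

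\textbf{Combinational clause.} The paper takes the IEC~61131-3 semantics to be path-based from the outset: a path is energised iff all its contacts are closed, and alternative paths are OR-ed. Its equivalence with $\text{energy}(q)$ is therefore essentially definitional. You instead fix a local power-flow semantics $\text{pow}$, in which each \texttt{connectionPointIn} ORs its inputs, and derive the path-sum form by induction on topological depth using distributivity. This gives a real argument for networks whose branches share prefixes or nest series/parallel structure, and it shows where acyclicity is used.

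\textbf{Latching clause.} The paper only exhibits the one-scan composition and the reset-priority (last-write-wins) reading. It never carries out the multi-scan ``set at or before $k$, no reset since'' claim. Your induction on $k$ (with $x^{(0)}=\text{false}$) does establish that characterisation, as an iff, and it fixes the same-scan tie-break explicitly (reset at step $j$ cancels a set at step $j$).

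\textbf{Two small adjustments.} First, with $\text{lit}(v)=\text{true}$ for coils, your $\text{pow}$ passes power through chained coils, which IEC~61131-3 permits. $\text{Paths}$ under Definition~\ref{def:rungpath}, however, allows only contact intermediates. Your unique-decomposition step in the induction therefore silently assumes that coils feed only the right rail. Either state that assumption (the paper's definitions rely on it implicitly too), or allow coil intermediates in $\text{Paths}$, which would depart from Definition~\ref{def:rungpath}. Second, your DFS visited-set concern is not needed here. This statement is about $\text{energy}(q)$ as defined in Definition~\ref{def:pathset}, not about the algorithm's output. That link is the content of Theorem~\ref{thm:completeness}, where the paper assumes a per-path visited set.
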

\begin{proof}[Proof sketch]
\emph{Combinational case.} By Definition~\ref{def:graph}, $(u,v) \in E$ iff current flows from $u$ to $v$, so a rung path $p$ is energised iff every contact on $p$ is closed. A contact $c$ is closed iff $\text{var}(c) = \text{true}$ (normally-open) or $\text{var}(c) = \text{false}$ (normally-closed); this is exactly $\text{expr}(p)$ in Definition~\ref{def:expr}. A series connection of contacts is closed iff all are closed (conjunction); independent paths to the same coil are alternative energization routes and are combined disjunctively. Hence $\text{energy}(q)$ in Definition~\ref{def:pathset} is the disjunction of all path conjunctions, matching the standard rung-evaluation rule for non-latching coils.

\emph{Latching case.} By Definition~\ref{def:update}, processing the SET coil first sets $q^{(k+1)} = q^{(k)} \vee \text{energy}(q_{\mathrm{set}})$; if a RESET coil for the same variable is processed afterwards in the same scan, it overwrites this with $q^{(k+1)} \wedge \neg\,\text{energy}(q_{\mathrm{reset}})$, giving reset priority within the scan, consistent with the de facto IEC~61131-3 convention of last-write-wins on a single coil variable. Reversing the processing order swaps which operation has priority, which is precisely the failure mode observed when \code{rightPowerRail} ordering is not respected (Section~\ref{sec:ordering}): a RESET processed before a subsequent SET in the same scan is silently overridden, yielding $q^{(k+1)} = \text{true}$ even when the program intends the coil to end the scan reset.
\end{proof}

\begin{theorem}[Completeness of Rung Extraction]
\label{thm:completeness}
Let $G$ be a finite \gls{dag} satisfying Definition~\ref{def:graph}. Then Algorithm~\ref{alg:graphical}'s depth-first search from each \code{leftPowerRail} node enumerates a path set $P$ such that $P = \text{Paths}(q)$ for every coil $q$, i.e.\ every rung path in $G$ is found exactly once, and no sequence emitted by the search omits or duplicates a contact relative to its corresponding path in $G$.
\end{theorem}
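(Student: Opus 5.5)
The plan is to exhibit a bijection between the root-to-coil branches of the recursion tree of Algorithm~\ref{alg:graphical} and the rung paths of Definition~\ref{def:rungpath}. I would treat the search as the standard backtracking path enumeration. It starts at a \code{leftPowerRail} node $r$ with the current prefix $\langle r\rangle$. At a node $u$ it iterates over the out-neighbours $v$ with $(u,v)\in E$, extends the prefix by $v$, and then does one of three things:

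\begin{enumerate}
\item if $\text{tag}(v)=\code{coil}$, it emits the prefix as a path ending at that coil;
\item if $\text{tag}(v)=\code{contact}$, it recurses into $v$;
\item if $v$ is anything else, it abandons that extension.
\end{enumerate}

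Each emitted sequence is the prefix itself, so the contact list is read directly off the path.

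Before the two inclusions I would establish termination and simplicity from the \gls{dag} hypothesis of Definition~\ref{def:graph}. Every prefix is a directed walk in a finite acyclic graph, so it is a simple path of length at most $|V|$. The recursion tree therefore has finite depth and finite branching, and the search terminates.

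The inclusion $P\subseteq\bigcup_q\text{Paths}(q)$ is an easy induction on recursion depth. The invariant is that every prefix on the stack has the form $\langle r,c_1,\dots,c_i\rangle$, with each $c_j$ a contact and consecutive nodes joined by edges of $E$. An emitted sequence then matches the shape required by Definition~\ref{def:rungpath}, and it is simple by acyclicity.

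The converse inclusion goes by induction on $n$ for a rung path $\langle r,c_1,\dots,c_n,q\rangle$. The claim is that the recursion reaches a call with prefix $\langle r,c_1,\dots,c_i\rangle$ for each $i\le n$. The inductive step uses only that $(c_i,c_{i+1})\in E$, so $c_{i+1}$ lies among the iterated successors, and that $c_{i+1}$ is a contact, so the search recurses into it. At $i=n$ the edge $(c_n,q)$ triggers the emission.

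For \emph{exactly once}, I would note that the children of any recursion node correspond to distinct successors. This holds because $E$ is a set, so repeated \code{refLocalId} entries in one \code{connectionPointIn} must be collapsed into a single edge. Two distinct recursion branches therefore diverge at some node and emit different vertex sequences, so the map from branches to paths is injective. No contact is omitted or duplicated, because the emitted contact list is literally $c_1,\dots,c_n$ in path order.

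I expect the main obstacle to be reconciling this idealised search with the implementation described after Definition~\ref{def:graph}, where ``visited nodes are not revisited.'' If the visited mark is global, the theorem is false. For example, two parallel contacts can merge into a shared series contact before the coil, and the second parallel branch would then never reach the coil. My first move is to interpret the visited set as the current recursion stack, marking on entry and unmarking on return. On a \gls{dag} that check is vacuous, so the argument above goes through unchanged. If the algorithm instead uses a global mark, I would restrict the theorem to graphs in which every contact has in-degree and out-degree one apart from rail and coil junctions, that is, tree-shaped rungs. In that case the reachable subgraph from $r$ is a tree and global marking cannot prune a branch. A secondary point to state explicitly is that the search must stop at coils rather than traverse through them, so that series coils are not misread as contacts.
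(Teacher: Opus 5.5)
Your proposal is correct and takes essentially the paper's route: DFS enumeration of simple paths with a per-path visited set, acyclicity for termination and simplicity, and fidelity because the contact list is read directly off each emitted path. The paper's sketch describes the visited set as per-path, so your stack-based reading is the intended one. Your two-sided induction and injectivity argument make rigorous what the paper delegates to ``standard DFS reachability.'' The two conditions you state explicitly are assumptions the paper's sketch uses silently, not gaps in your argument: duplicate \texttt{refLocalId} entries must collapse so each successor is explored once, and the search must stop at coils rather than pass through them.
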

\begin{proof}[Proof sketch]
\emph{No spurious or missing paths.} The search maintains a \textit{visited} set per path and explores, from each node, exactly the forward edges $\text{feeds}(u) = \{v \mid (u,v) \in E\}$ constructed in Algorithm~\ref{alg:graphical}, lines 2--3, directly from the backward \code{refLocalId} references in the XML. Since $G$ is finite and acyclic, the search over each root terminates. Standard \gls{dfs} reachability guarantees every simple directed path from a \code{leftPowerRail} node to $q$ is visited exactly once (no two distinct calls produce the same node sequence, because the algorithm only emits a path upon reaching $q$, and a finite \gls{dag} admits finitely many simple paths between any two nodes). 

\emph{Fidelity of extraction.} Along each emitted path, line~5 of Algorithm~\ref{alg:graphical} appends one contact term for every node $c_i$ with $\text{tag}(c_i) = \code{contact}$ encountered between $r$ and $q$, in traversal order, and ignores all other node tags (e.g.\ \code{block}); thus the emitted sequence is exactly $\langle c_1, \ldots, c_n \rangle$ as in Definition~\ref{def:rungpath}, with no contact added, dropped, or reordered relative to the path found in $G$. Combining both parts, the set of rungs emitted by the algorithm is precisely $\text{Paths}(q)$ for every $q$, as required. We note this guarantees fidelity to the connection graph $G$ as constructed from the XML; it does not independently verify that $G$ itself matches the diagram's intended visual wiring, which we instead validate empirically via the regression and conformance results of Section~\ref{sec:experiments}.
\end{proof}

\subsection{The Vacuous Verification Problem}

When ESBMC-PLC processed a graphical \gls{ld} file, it detected no \code{<rung>} children under the \code{<LD>} element and emitted zero rung assignments. The resulting GOTO \gls{ir} had the structure of Listing~\ref{lst:empty-ir}: all output variables held their default value (\code{false}), and every property of the form $\neg\,\text{var}$ was trivially satisfied. The \textit{k}-induction proof converged at $k=1$ with an empty inductive invariant -- a degenerate proof that carries no information about the actual program.

ESBMC-GraphPLC eliminates this by invoking the \gls{dfs} resolver (Algorithm~\ref{alg:graphical}) whenever a graphical format is detected, producing a GOTO \gls{ir} with full rung assignments and nondeterministic inputs (Listing~\ref{lst:full-ir}).

%=============================================================================
\section{\gls{dfs}-Based Graphical \gls{ld} Resolver}
\label{sec:algorithm}
%=============================================================================

\subsection{Algorithm Overview}

Algorithm~\ref{alg:graphical} formalizes the graphical \gls{ld} resolver. The algorithm takes the \code{<LD>} XML element as input. It produces a list of \code{RungNode} structures, each containing a list of contacts and a destination coil, matching the internal representation used by the existing textual-\gls{ld} encoder.

\begin{algorithm}[t]
\caption{$\textsc{GraphicalLD\_to\_RungExpressions}(\textit{\gls{ld}})$}
\label{alg:graphical}
\begin{algorithmic}[1]
\Require PLCopen XML \code{<LD>} network element
\Ensure List of \code{RungNode} (contacts + coil)
\State \textbf{detect} graphical format: \textit{has\_graphical} $\leftarrow$ \code{leftPowerRail} exists \textbf{and} no \code{<rung>} children
\If{not \textit{has\_graphical}} \Return \textit{nil} \EndIf
\State \textbf{parse nodes:} for each element $e$ with \code{localId}: build $\textit{GNode}(e.\textit{tag}, e.\textit{var}, e.\textit{neg}, e.\textit{stor}, e.\textit{fed\_by})$
  \Statex \quad\quad\quad $\textit{fed\_by}$ $\leftarrow$ list of \code{refLocalId} from $e$\code{.connectionPointIn//connection}
\State \textbf{build forward edges:} for each $B$, for each $A \in B.\textit{fed\_by}$: \textit{nodes}[$A$].\textit{feeds}.append($B$)
\State \textbf{order coils} by \code{rightPowerRail}:
  \Statex \quad for each \code{connection} in \code{rightPowerRail//connectionPointIn//connection}:
  \Statex \quad\quad \textit{ordered\_coils}.append(\textit{nodes}[\code{refLocalId}])
\State \textbf{\gls{dfs}} from each \textit{leftRail} $r$ to each \textit{coil} $q$ in \textit{ordered\_coils}:
  \Statex \quad \textit{paths} $\leftarrow$ \textsc{\gls{dfs}}($G$, $r$, $q$, \textit{visited}$=\emptyset$)
  \Statex \quad for each path $p$:
  \Statex \quad\quad \textit{contacts} $\leftarrow$ $[n \in p \mid n.\textit{tag} = \code{contact}]$
  \Statex \quad\quad emit \code{RungNode}(\textit{contacts}, $q$)
\State \textbf{I/O inference} (three-tier) after graph is built:
  \Statex \quad \textbf{Tier 1} (address-based): $\textit{var}$ starts with \code{\%IX} $\Rightarrow$ \textit{is\_input}; \code{\%QX} $\Rightarrow$ \textit{is\_output}
  \Statex \quad \textbf{Tier 2} (heuristic): scan all emitted \code{RungNode}s:
  \Statex \quad\quad \textit{contact\_only\_vars} $\leftarrow$ vars appearing only in contacts $\Rightarrow$ \textit{is\_input}
  \Statex \quad\quad \textit{coil\_only\_vars} $\leftarrow$ vars appearing only as coil targets $\Rightarrow$ \textit{is\_output}
  \Statex \quad \textbf{Tier 3} (default): unclassified vars treated as internal (persistent state)
\State \Return \textit{ordered list of} \code{RungNode}
\end{algorithmic}
\end{algorithm}

\subsection{Detection and Node Parsing (Lines 1--2)}

The graphical format is detected by the presence of a \code{leftPowerRail} element and the absence of any \code{<rung>} child -- a reliable discriminator because the two formats are mutually exclusive within a single \code{<LD>} network. ESBMC-GraphPLC accepts both the standard and PPX namespace variants of the \code{tc6\_0201} format.

Node parsing iterates over all direct children of \code{<LD>} that carry a \code{localId} attribute. For each node, the \textit{fed\_by} list is populated by selecting \code{.//connection} nodes within the element's \code{connectionPointIn} subtree. The use of \code{.//connection} (an XPath descendant axis) rather than direct children is critical: \code{connection} elements are nested one level inside \code{connectionPointIn}, not at the top level.

\subsection{Forward Edge Graph (Line 3)}

Building forward edges converts the pull-model representation (``$B$ is fed by $A$'') into a push-model graph (``$A$ feeds $B$''). This is necessary for \gls{dfs} from the \code{leftPowerRail}: given a node $A$, the forward graph immediately tells us which nodes $A$ can activate.

\subsection{\code{rightPowerRail} Ordering (Line 4)}
\label{sec:ordering}

A critical correctness requirement is that SET coils are processed \emph{before} RESET coils in the same scan cycle. If RESET fires first and SET fires second, the coil ends set -- correct. But if the order is reversed in the GOTO \gls{ir}, RESET fires last and the coil ends reset -- an incorrect result that manifests as spurious SAFE verdicts for programs that should have latching behavior.

In graphical \gls{ld}, the execution order of coils is encoded implicitly in the \code{rightPowerRail}'s \code{connectionPointIn} sequence: coils that appear as \code{refLocalId} entries earlier in this list are executed first. CONTROLLINO programs consistently place SET coils before RESET coils in this sequence (Figure~\ref{fig:ordering}). Algorithm~\ref{alg:graphical} line~4 reads this sequence to produce \textit{ordered\_coils}, which \gls{dfs} then processes in order. Without this ordering step, programs with SET/RESET pairs produced spurious violations during development that disappeared after the fix was applied.

\begin{figure}[htbp]
    \centering
    \includegraphics[width=0.75\linewidth]{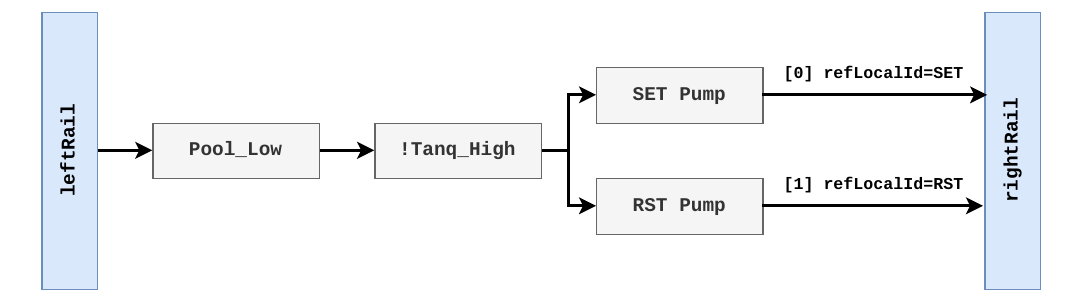}
    \caption{\code{rightPowerRail} \code{connectionPointIn} sequence for \code{water\_control}. Entry \code{[0]} references the SET coil and entry \code{[1]} references the RESET coil; this order, not coil position in the XML document, determines scan execution order.}
    \label{fig:ordering}
\end{figure}

\subsection{\gls{dfs} Rung Extraction (Line 5)}

For each \code{leftPowerRail} $r$ and each coil $q$ in \textit{ordered\_coils}, the \gls{dfs} collects all simple paths from $r$ to $q$. Each path becomes one \code{RungNode}: the contacts along the path form the conjunction of conditions, and the coil specifies the assignment. Multiple paths to the same coil produce multiple \code{RungNode}s, which the existing GOTO-\gls{ir} encoder compiles as an OR over the rung expressions (matching IEC~61131-3 parallel-rung semantics).

\textbf{Complexity.} \gls{dfs} per coil is $\mathcal{O}(|V| + |E|)$; total cost is $\mathcal{O}(C \times (|V| + |E|))$ where $C$ is the number of coils. In practice, $|V| \leq 30$ and $C \leq 5$ for all evaluated programs; measured runtime is below \SI{70}{\milli\second} including \gls{smt} solving.

\subsection{Three-Tier I/O Inference (Line 6)}
\label{sec:io-inference}

IEC~61131-3 requires inputs to be re-sampled each scan cycle nondeterministically and outputs nondeterministically to be persistent static variables. Correct classification is therefore essential for soundness.

\textbf{Tier~1 -- Address-based (precise):} IEC~61131-3 hardware addresses beginning with \code{\%IX} designate physical digital inputs; \code{\%QX} designates digital outputs. When present, these addresses provide definitive classification with no ambiguity.

\textbf{Tier~2 -- Heuristic (fallback):} When address information is absent (programs that declare variables without hardware binding), the resolver applies the following heuristic after all \code{RungNode}s are built: a variable appearing \emph{only} as a contact (never as a coil target) is classified as an input; a variable appearing \emph{only} as a coil target (never as a contact) is classified as an output. Variables appearing in both roles are classified as internal persistent state (memory bits).

\textbf{Tier~3 -- Default:} Variables not classified by Tiers~1 or~2 are treated as internal persistent state.

The heuristic is conservative: misclassifying an output as an input causes it to be re-sampled nondeterministically each cycle rather than being persisted, which is a strict over-approximation of the variable's actual behavior -- the verifier explores a superset of the reachable states of the true program. Since this superset includes every state reachable under the correct persistent semantics, a SAFE proof obtained over the over-approximated model also holds for the true program; the misclassification can only introduce spurious counterexamples (a completeness loss), never an unsound SAFE verdict (a soundness loss). The same over-approximation argument applies to coils that the \gls{dfs} cannot reach from any \code{leftPowerRail} at all (Tier~2 leaves them unclassified and the encoder treats them as free nondeterministic variables rather than assuming a fixed value); we encountered exactly this case for an arithmetic-comparator-driven coil in \code{dimmer\_light\_control} (\S\ref{sec:experiments}). No misclassification was observed across the 3 evaluated programs.

%=============================================================================
\section{Implementation}
\label{sec:implementation}
%=============================================================================

\subsection{Files Changed}

The entire implementation is confined to a single file within the
ESBMC source tree (\code{src/ld-frontend/parser/}):
\begin{center}
\code{src/ld-frontend/parser/plcopen\_xml\_parser.cpp} (+274~lines)
\end{center}

Two existing functions were extended, and two new functions were added:
\begin{itemize}
  \item \code{parse\_network()} \emph{(extended)}: now falls back to graphical resolution when no \code{<rung>} children are found.
  \item \code{parse\_var\_decl()} \emph{(extended)}: now infers \code{is\_input}/\code{is\_output} from \code{\%IX}/\code{\%QX} hardware addresses.
  \item \code{parse\_graphical\_ld()} \emph{(new)}: the \gls{dfs} graph resolver described in Algorithm~\ref{alg:graphical}.
  \item Heuristic I/O inference \emph{(new)}: a post-processing pass inside \code{parse()} that infers \code{is\_input}/\code{is\_output} for variables without hardware addresses, based on whether they appear only as contacts or only as coils across all networks.
\end{itemize}

\subsection{Zero Backend Changes}

No changes were made outside \code{plcopen\_xml\_parser.cpp}. The GOTO~\gls{ir} encoder, the \textit{k}-induction engine, the Z3 binding, and the property encoder are identical to ESBMC-PLC. The graphical resolver's sole output is a \code{RungNode} list -- the same data structure produced by the textual parser -- which the existing encoder translates to GOTO~\gls{ir} assignments. This design choice means that every correctness guarantee already established for the textual encoder (encoding rules, scan-cycle loop structure, property injection) applies immediately to graphical programs.

\subsection{Python Prototype}

Before implementing the C\texttt{++} resolver, a 100-line Python prototype was developed to validate the \gls{dfs} algorithm (\code{tools/ld-graphical-converter/graphical\_to\_textual.py}) and the \code{rightPowerRail} ordering insight against the 3 benchmark programs. The prototype reads a graphical PLCopen XML file and emits an equivalent textual PLCopen XML file, which ESBMC-PLC then verified. All 3 programs produced equivalent textual output before the C\texttt{++} implementation was finalized.

\subsection{Before vs.\ After: GOTO \gls{ir} Comparison}

Listing~\ref{lst:full-ir} shows the GOTO \gls{ir} emitted for \code{water\_control} by ESBMC-GraphPLC. Compared with the empty \gls{ir} in Listing~\ref{lst:empty-ir}, the key difference is that the inputs are now nondeterministic and the SET/RESET rung logic is fully encoded.

\noindent\begin{minipage}{\textwidth}
\begin{lstlisting}[caption={Full GOTO IR for \code{water\_control} emitted by ESBMC-GraphPLC}, label={lst:full-ir}]
static bool Water_Pump = false;  /* persistent output coil */

void plc_cycle(void) {
    /* Tier-1 I/O inference: %IX -> nondet input */
    bool Pool_Low_Level_Sensor   = __ESBMC_nondet_bool(); /* %IX0.0 */
    bool Tank_High_Level_Sensor  = __ESBMC_nondet_bool(); /* %IX0.1 */
    bool Tank_Low_Level_Sensor   = __ESBMC_nondet_bool(); /* %IX0.2 */
    bool Automatic_Manual_Switch = __ESBMC_nondet_bool(); /* %IX0.3 */
    bool Stop_Button             = __ESBMC_nondet_bool(); /* %IX0.4 */
    bool Start_Button            = __ESBMC_nondet_bool(); /* %IX0.5 */

    /* SET rungs (first, per rightPowerRail order; OR-combined) */
    if (Automatic_Manual_Switch && Pool_Low_Level_Sensor &&
        !Tank_Low_Level_Sensor && !Tank_High_Level_Sensor)
        Water_Pump = true;
    if (Start_Button && Pool_Low_Level_Sensor && !Tank_High_Level_Sensor)
        Water_Pump = true;

    /* RESET rungs (second, per rightPowerRail order; OR-combined) */
    if (!Pool_Low_Level_Sensor)
        Water_Pump = false;
    if (Stop_Button)
        Water_Pump = false;
    if (Tank_High_Level_Sensor)
        Water_Pump = false;

    /* Safety assertions (from YAML properties) */
    assert(!Water_Pump || Pool_Low_Level_Sensor);        /* P1 */
    assert(!Water_Pump || !Tank_High_Level_Sensor);      /* P2 */
    assert(!(Water_Pump && Stop_Button));                /* P3 */
}

int main(void) { while (1) { plc_cycle(); } }
\end{lstlisting}
\end{minipage}

The GOTO~\gls{ir} in Listing~\ref{lst:full-ir} provides a sound verification substrate: nondeterministic inputs explore all $2^6 = 64$ possible sensor combinations per scan cycle, the SET/RESET order is correct, and the property assertions are evaluated against the live program state.

%=============================================================================
\section{Experimental Evaluation}
\label{sec:experiments}
%=============================================================================

\subsection{Research Questions}

\begin{description}[leftmargin=2em]
  \item[RQ1] Does ESBMC-GraphPLC produce a full GOTO~\gls{ir} for graphical \gls{ld} programs, eliminating the vacuous verification of ESBMC-PLC?
  \item[RQ2] Are the verification results correct for the extracted rung logic -- i.e., do verified-SAFE programs exhibit sound proofs with nondeterministic inputs?
  \item[RQ3] Does ESBMC-GraphPLC scale to the evaluated benchmark corpus in practical time (\SI{<300}{\second})?
  \item[RQ4] Are the 11 textual \gls{ld} benchmarks from ESBMC-PLC that genuinely exist in textual format fully preserved, with zero regressions?
\end{description}

\subsection{Benchmark Suite}

Table~\ref{tab:benchmarks} describes the 3~graphical \gls{ld} programs evaluated, all sourced from CONTROLLINO/OpenPLC~Editor exports. During benchmark curation we also attempted two Beremiz \gls{ide} example programs (\code{beremiz\_traffic\_light}, \code{beremiz\_bacnet}); both are excluded from this table and discussed instead in \S\ref{sec:threats} as discovered limitations: \code{beremiz\_bacnet} contains no Ladder Diagram content at all (it is a pure \gls{fbd}/\gls{st} program), and \code{beremiz\_traffic\_light}'s only Ladder network is gated by timer and rising-edge-trigger function blocks whose dynamics the current resolver does not preserve, causing the corresponding output to collapse to a constant. We report these as genuine, demonstrated gaps rather than padding the benchmark count with vacuous or misleading results.

\begin{table}[htbp]
\centering
\small
\caption{Graphical \gls{ld} benchmark suite (3 programs). CONTROLLINO/OpenPLC Editor}
\label{tab:benchmarks}
\begin{tabular}{llllll}
\toprule
\textbf{ID} & \textbf{Program} & \textbf{Source} & \textbf{Rungs} & \textbf{Result} & \textbf{Time} \\
\midrule
G1 & \code{water\_control}   & CONTROLLINO~\cite{CONTROLLINO2024} & 5 (2 SET + 3 RST) & SAFE $k=2$ & 0.05\,s \\
G2 & \code{stairs\_light}    & CONTROLLINO~\cite{CONTROLLINO2024} & 6 (2 SET + 2 RST + 2 plain) & SAFE $k=2$ & 0.04\,s \\
G3 & \code{dimmer\_light\_control} & OpenPLC examples~\cite{CONTROLLINO2024} & 4 (plain) & SAFE $k=2$ & 0.06\,s \\
\midrule
\multicolumn{3}{l}{\textbf{Total}} & \multicolumn{1}{c}{--} & \textbf{3 SAFE} & $\leq$0.06\,s \\
\bottomrule
\end{tabular}
\end{table}

\subsection{Experimental Setup}

The hardware and software configuration used in our experiments is summarized in Table~\ref{tab:setup}. All experiments used \code{ld-verify} with \code{--k-induction --unlimited-k-steps} for safety proofs and \code{--incremental-bmc} for bug-finding. Property files for G1/G2 were adapted from the YAML specifications verified in ESBMC-PLC for the CONTROLLINO programs; the property file for G3 was taken from the existing \code{dimmer\_light\_control} benchmark.

\begin{table}[htbp]
\centering
\small
\caption{Hardware and software configuration}
\label{tab:setup}
\begin{tabular}{ll}
\toprule
\textbf{Component} & \textbf{Specification} \\
\midrule
\gls{cpu}         & Apple M1 (ARM64 aarch64, 8 cores) \\
RAM         & 8\,GB \\
OS          & macOS~26 (Tahoe) \\
\gls{esbmc}    & v8.3.0 (\code{ENABLE\_LD\_FRONTEND=On}, \code{RelWithDebInfo}) \\
\gls{smt} solver  & Z3~4.16.0 \\
Compiler    & Apple Clang~21.0.0 (Apple Silicon) \\
Timeout     & \SI{300}{\second} per run \\
Repetitions & 3 runs (median reported) \\
\bottomrule
\end{tabular}
\end{table}

\subsection{RQ1 -- Full \gls{ir} vs.\ Empty \gls{ir}}

Table~\ref{tab:ir-compare} demonstrates the core result of this paper: the transition from vacuous empty-\gls{ir} proofs (ESBMC-PLC) to sound full-\gls{ir} proofs (ESBMC-GraphPLC) for every graphical \gls{ld} program.

\begin{table}[htbp]
\centering
\small
\caption{GOTO \gls{ir} completeness before and after this work. \emph{Full \gls{ir}}: nondeterministic inputs + rung logic generated. \emph{Empty \gls{ir}}: no rung assignments, all variables at zero}
\label{tab:ir-compare}
\begin{tabular}{lcccc}
\toprule
\textbf{Program} & \textbf{ESBMC-PLC \gls{ir}} & \textbf{ESBMC-GraphPLC \gls{ir}} & \textbf{Inputs (nondet)} & \textbf{Rungs emitted} \\
\midrule
\code{water\_control}   & Empty (vacuous) & \textbf{Full} & 6 & 5 \\
\code{stairs\_light}    & Empty (vacuous) & \textbf{Full} & 3 & 6 \\
\code{dimmer\_light\_control} & Empty (vacuous) & \textbf{Full} & 1 & 4 \\
\midrule
\textbf{Total} & 3 empty & \textbf{3 full} & & \\
\bottomrule
\end{tabular}
\end{table}

\subsection{RQ2 -- Correctness of Extracted Rung Logic}

\subsubsection{G1 -- water\_control (CONTROLLINO)}

\textbf{Program:} Water pump control with pool and tank level sensors. Source: \code{water\_control/plc.xml} from the CONTROLLINO repository~\cite{CONTROLLINO2024}, copied without modification. Variables: \code{Pool\_Low\_Level\_Sensor} (\code{\%IX0.0}), \code{Tank\_High\_Level\_Sensor} (\code{\%IX0.1}), \code{Tank\_Low\_Level\_Sensor} (\code{\%IX0.2}), \code{Automatic\_Manual\_Switch} (\code{\%IX0.3}), \code{Stop\_Button} (\code{\%IX0.4}), \code{Start\_Button} (\code{\%IX0.5}), \code{Water\_Pump} (\code{\%QX0.0}).

\textbf{Rung structure:} Five rungs resolved by \gls{dfs} -- two SET paths and three RESET paths, all OR-combined onto the same coil per Definition~\ref{def:pathset}. SET \code{Water\_Pump} when \code{Automatic\_Manual\_Switch \&\& Pool\_Low\_Level\_Sensor \&\& !Tank\_Low\_Level\_Sensor \&\& !Tank\_High\_Level\_Sensor}, \emph{or} when \code{Start\_Button \&\& Pool\_Low\_Level\_Sensor \&\& !Tank\_High\_Level\_Sensor}. RESET \code{Water\_Pump} when \code{!Pool\_Low\_Level\_Sensor}, \emph{or} \code{Stop\_Button}, \emph{or} \code{Tank\_High\_Level\_Sensor}. The \code{rightPowerRail} places the SET coil before the RESET coil, ensuring SET precedes RESET within the scan.

\textbf{Properties verified:} the YAML specifications for G1 are shown in Listing~\ref{lst:g1-props}.

\begin{lstlisting}[language={}, caption={YAML properties for G1 (\code{water\_control})}, label={lst:g1-props}]
properties:
  - id: P1
    kind: invariant
    expression: "!Water_Pump || Pool_Low_Level_Sensor"
    description: "Pump must not run when pool is empty"
  - id: P2
    kind: invariant
    expression: "!Water_Pump || !Tank_High_Level_Sensor"
    description: "Pump must not run when tank is full"
  - id: P3
    kind: absence
    expression: "Water_Pump && Stop_Button"
    description: "Stop button must halt the pump"
\end{lstlisting}

\textbf{Result:} SAFE $k=2$, \SI{50}{\milli\second}. All three properties proved for all scan counts and all $2^6$ input combinations per cycle.

\subsubsection{G2 -- stairs\_light (CONTROLLINO)}

\textbf{Program:} PIR-triggered staircase light with TOF off-delay timer and two manual override buttons. Source: \code{stairs\_light\_control/plc.xml}, CONTROLLINO repository.

\textbf{Rung structure:} Six rungs resolved by \gls{dfs} across two coil variables (\code{stairs\_light} and \code{lights\_buttons\_state}): two SET paths, two RESET paths, and two plain assignments. Contact-only variables (\code{stairs\_pir\_sensor}, \code{control\_button\_up}, \code{control\_button\_down}) are classified as inputs by Tier-2 heuristic; \code{lights\_buttons\_state} appears in both contacts and coils and is classified as internal persistent state, consistent with Properties P1 and P3. The \code{rightPowerRail} sequence places the SET coil before the RESET coil for \code{stairs\_light}, preserving IEC~61131-3 latch semantics.

\textbf{TOF timer handling:} The \code{stairs\_light} network includes a \code{TOF} off-delay block as a node in the graphical \gls{ld} connection graph. The \gls{dfs} traverses through it but, as discussed in \S\ref{sec:threats}, does not represent it in the emitted Boolean expression. The TOF block's output is therefore treated as a nondeterministic Boolean -- a sound over-approximation: the verifier explores a superset of the timed program's reachable states. Hence, a SAFE proof over this model also holds for the true program. Unlike the \code{beremiz\_traffic\_light} case (where both SET and RESET of the same coil were timer-gated, causing the coil to collapse to a constant), here the timer gates only the SET paths; plain contacts drive the RESET paths. The coil, therefore, does not collapse, and the resulting SAFE result is non-vacuous.

\textbf{Properties verified:} the YAML specifications for G2 are shown in Listing~\ref{lst:g2-props}.

\begin{lstlisting}[language={}, caption={YAML properties for G2 (\code{stairs\_light})}, label={lst:g2-props}]
properties:
  - id: P1
    kind: invariant
    expression: "!stairs_light || stairs_pir_sensor || lights_buttons_state"
    description: "Light must not be on without PIR or button activation"
  - id: P2
    kind: invariant
    expression: "!(lights_buttons_state && control_button_up && control_button_down)"
    description: "Simultaneous button presses must not leave state undefined"
  - id: P3
    kind: invariant
    expression: "!lights_buttons_state || stairs_light"
    description: "When the button state is active, the light must be on"
\end{lstlisting}

\textbf{Result:} SAFE $k=2$, \SI{40}{\milli\second}.

\subsubsection{G3 -- dimmer\_light\_control (OpenPLC examples)}

\textbf{Program:} Dimmer light controller with a manual reset button, a cycle-toggle flip-flop, and a full-bright override. Source: \code{Dimmer\_light\_control/plc.xml} from the CONTROLLINO-PLC/OpenPLC\_examples repository~\cite{CONTROLLINO2024}.

\textbf{Rung structure:} Four rungs resolved by \gls{dfs}, all plain (non-latching) coil assignments: \code{Reset\_state := Control\_button}; \code{Flag\_cicle := Light\_on\_state \&\& !Flag\_cicle}; \code{Light\_output := (Light\_on\_state \&\& !Flag\_cicle) || Full\_bright}. \code{Light\_on\_state} itself is driven in the source program by an arithmetic \code{GT} (greater-than) comparator block with no \code{leftPowerRail}-reachable digital path -- the construct-limitation case discussed in \S\ref{sec:io-inference} and \S\ref{sec:threats}: ESBMC-GraphPLC leaves it as a free nondeterministic variable rather than assuming a fixed value, a sound over-approximation rather than a silently wrong one.

\textbf{Properties verified:} The YAML specifications for G3 are shown in Listing~\ref{lst:g3-props}.

\begin{lstlisting}[language={}, caption={YAML properties for G3 (\code{dimmer\_light\_control})}, label={lst:g3-props}]
properties:
  - id: P1
    kind: invariant
    expression: "!Light_output || Light_on_state || Full_bright"
    description: "Light output is active only when the dimmer state or full-bright is set"
  - id: P2
    kind: invariant
    expression: "!Full_bright || Light_output"
    description: "Full-bright switch must activate the light output"
  - id: P3
    kind: invariant
    expression: "!Flag_cicle || Light_on_state"
    description: "Cycle flag can only be set when the dimmer state is active"
\end{lstlisting}

\textbf{Result:} SAFE $k=2$, \SI{60}{\milli\second}.

\subsection{RQ3 -- Scalability}

The verification results for the graphical \gls{ld} benchmarks are presented in Table~\ref{tab:results}. All 3~programs complete well under the \SI{300}{\second} timeout. Median verification times (of 3 runs) are \SI{50}{\milli\second} (G1), \SI{40}{\milli\second} (G2), and \SI{60}{\milli\second} (G3). The \gls{dfs} graph traversal and property verification together require at most \SI{70}{\milli\second}, consistent with ESBMC-PLC's sub-\SI{60}{\milli\second} results for programs of comparable size. The overhead introduced by the \gls{dfs} resolver is negligible relative to the \gls{smt} solving time.

\begin{table}[htbp]
\centering
\small
\caption{Verification results for graphical \gls{ld} benchmarks}
\label{tab:results}
\begin{tabular}{llcccc}
\toprule
\textbf{ID} & \textbf{Program} & \textbf{\gls{ir}} & \textbf{Result} & \textbf{$k$} & \textbf{Time (ms)} \\
\midrule
G1  & \code{water\_control}          & Full & SAFE & 2 & 50 \\
G2  & \code{stairs\_light}           & Full & SAFE & 2 & 40 \\
G3  & \code{dimmer\_light\_control}  & Full & SAFE & 2 & 60 \\
\midrule
\multicolumn{3}{l}{\textbf{Total: 3 programs, all full \gls{ir}}} & \textbf{3 SAFE} & 2 & max: 60 \\
\bottomrule
\end{tabular}
\end{table}

\subsection{RQ4 -- Regression on Textual \gls{ld} Benchmarks}

Table~\ref{tab:regression} confirms that all 11 ESBMC-PLC benchmarks that genuinely exist in textual \code{<rung>}-based format produce identical results under ESBMC-GraphPLC. The extension to graphical format is fully backward-compatible: the textual-format code path is unchanged (the new graphical resolver is only invoked when no \code{<rung>} elements are found), so textual programs follow the original parser path without modification.

\begin{table}[htbp]
\centering
\small
\caption{Regression test results: ESBMC-PLC benchmarks under ESBMC-GraphPLC}
\label{tab:regression}
\begin{tabular}{lllcc}
\toprule
\textbf{ID} & \textbf{Program} & \textbf{Expected} & \textbf{Result} & \textbf{Changed?} \\
\midrule
CS1  & motor\_interlock        & SAFE      & SAFE      & No \\
CS2  & conveyor\_sequencing    & VIOLATION & VIOLATION & No \\
CS3  & emergency\_shutdown     & VIOLATION & VIOLATION & No \\
CS4  & traffic\_light\_unsafe  & VIOLATION & VIOLATION & No \\
CS5  & traffic\_light\_safe    & SAFE      & SAFE      & No \\
CS6  & bottle\_filling\_unsafe & VIOLATION & VIOLATION & No \\
CS7  & bottle\_filling\_safe   & SAFE      & SAFE      & No \\
CS8  & elevator\_unsafe        & VIOLATION & VIOLATION & No \\
CS9  & elevator\_safe          & SAFE      & SAFE      & No \\
CS12 & tank\_level\_unsafe     & VIOLATION & VIOLATION & No \\
CS13 & tank\_level\_safe       & SAFE      & SAFE      & No \\
\midrule
\multicolumn{3}{l}{\textbf{Regressions}} & \textbf{0 / 11} & \\
\bottomrule
\end{tabular}
\end{table}

\textbf{Note on numbering and on \code{water\_control}/\code{stairs\_light}:} CS10 and CS11 are intentionally absent. \code{water\_control} and \code{stairs\_light} do not exist in textual \code{<rung>}-based format in the archived ESBMC-PLC artifact -- both are graphical \code{tc6\_0201} files, identical in structure to the G1/G2 benchmarks in Table~\ref{tab:results}. Rather than claim a textual regression test that cannot be reproduced from the artifact, we report the textual regression suite at its true size of 11 programs and retain the original CS numbering with CS10/CS11 skipped, so this table remains traceable against the predecessor artifact.

%=============================================================================
\section{Discussion}
\label{sec:discussion}
%=============================================================================

\textbf{RQ1 -- Full \gls{ir}.} ESBMC-GraphPLC produces a full GOTO~\gls{ir} for all 3 graphical \gls{ld} programs. The transition from empty to full \gls{ir} is complete for these programs. Each one that previously yielded a vacuous proof now yields a sound proof over nondeterministic inputs with correct rung logic. The key enabling components are the \gls{dfs} resolver (extracts contacts and coils from the connection graph), the \code{rightPowerRail} ordering (ensures SET-before-RESET execution), and three-tier I/O inference (classifies variables as nondeterministic inputs or persistent outputs). Benchmark curation also surfaced two cases where this transition does \emph{not} hold cleanly -- discussed under Limitations below and in \S\ref{sec:threats} -- which we view as evidence that the evaluation methodology is capable of catching exactly the failure mode the paper is about, rather than only confirming successes.

\textbf{RQ2 -- Correctness.} All 3~programs verify SAFE at $k=2$. The properties verified are non-trivial invariants over nondeterministic inputs (e.g., P1 for G1 asserts that the pump cannot run when the pool is empty, across all $2^6$ input combinations per cycle). The uniform convergence at $k=2$ reflects the benchmark complexity: all three programs have safety invariants with a one-cycle inductive step, matching the pattern observed for textual benchmarks in ESBMC-PLC.

\textbf{RQ3 -- Scalability.} All 3 programs verify in under \SI{70}{\milli\second} (median of 3 runs: \SI{50}{\milli\second}, \SI{40}{\milli\second}, and \SI{60}{\milli\second} for G1, G2, and G3 respectively), well within \gls{cicd} integration requirements. The \gls{dfs} overhead is negligible; the bottleneck is Z3 solving, which completes in a fraction of a second for programs of this size.

\textbf{RQ4 -- Regression.} Zero regressions across the 11 textual benchmarks that genuinely exist in textual format confirm that the graphical extension is strictly additive. The conditional branch in \code{parse\_network()} (``if no \code{<rung>} children, invoke graphical resolver; else use textual path'') is the only change to existing program flow.

\textbf{Limitation: small benchmark suite, and no unsafe graphical benchmarks.} The evaluation reported here covers 3 graphical programs, not the larger corpus an initial benchmark search aimed for; \S\ref{sec:threats} discusses why a larger, real, vendor-sourced corpus was not reachable within this work and what would be required to build one. All 3 graphical programs verify SAFE. This reflects the source material: CONTROLLINO ships example programs designed to function correctly. Future work should introduce deliberately faulty graphical programs -- analogous to the CS2--CS4 category in ESBMC-PLC -- to evaluate counterexample quality under the graphical frontend.

%=============================================================================
\section{Threats to Validity}
\label{sec:threats}
%=============================================================================

\textbf{Internal validity.} All 3 graphical programs are verified SAFE; no unsafe graphical benchmark exists in the current suite. The absence of counterexample tests for the graphical frontend means that counterexample correctness for graphical programs has not been independently validated (though the GOTO~\gls{ir} path to the counterexample reporter is unchanged from ESBMC-PLC). The \code{rightPowerRail} ordering heuristic (SET-before-RESET) is observed to hold for all evaluated programs but is not guaranteed by the PLCopen standard -- vendor tools that place RESET coils before SET coils in the \code{rightPowerRail} sequence would require the user to reverse the ordering, or an updated resolver that inspects \code{storage} attributes.

\textbf{Discovered limitation: action-nested \gls{ld} networks (resolved during this work).} The initial node-discovery query searched only \code{//pou/body/LD} and \code{//pou/body/ladderDiagram}, missing \gls{ld} networks nested inside SFC step actions (\code{//pou/actions/action/body/LD}) -- a structure used by the Beremiz example \code{beremiz\_traffic\_light}. Under the original query, this program's real rung network was silently skipped, producing the same vacuous empty-\gls{ir} failure mode this paper exists to close, without any diagnostic indicating the omission. We extended the discovery query also to search action bodies; this is now reflected in the implementation, but the corpus of vendor files re-checked for this specific pattern remains small, and other nesting patterns (e.g.\ \gls{ld} inside transition bodies) may exist in unevaluated vendor exports.

\textbf{Discovered limitation: timer/trigger semantics dropped from rung paths.} Algorithm~\ref{alg:graphical} (line~5) extracts only \code{tag==contact} elements along a \gls{dfs} path; function-block nodes (\code{TON}, \code{TOF}, \code{R\_TRIG}, etc.) on the same path are traversed but not represented in the emitted Boolean expression. For \code{beremiz\_traffic\_light}'s \code{ORANGE\_LIGHT} blink network -- a SET/RESET pair gated by a \code{TON} timer and an \code{R\_TRIG} edge detector on each side -- this causes the timer's delay to disappear entirely: the SET and RESET conditions both reduce to immediate functions of \code{ORANGE\_LIGHT} itself, and because both rungs execute within the same scan (SET first, then RESET reading the just-updated value), the coil is provably and permanently \code{false} ($\vdash \Box\,\lnot\code{ORANGE\_LIGHT}$, confirmed by direct probe). We exclude this program from the main results (Table~\ref{tab:benchmarks}) rather than report it as a successful SAFE verification, since the result is correct only in the vacuous sense that the rest of this paper is designed to eliminate. Closing this gap requires either modeling function-block output as an explicit (possibly nondeterministic) term in the rung expression or rejecting rungs that contain unsupported function blocks rather than silently dropping them.

\textbf{Discovered limitation: arithmetic-comparator-driven coils.} In \code{dimmer\_light\_control}, the coil \code{Light\_on\_state} is driven by a \code{GT} (greater-than) comparator block whose inputs are not \gls{ld} contacts and have no \code{leftPowerRail}-reachable path in the connection graph. The \gls{dfs} correctly finds no rung for this coil; lacking a Tier-1 or Tier-2 classification, the encoder leaves it as a free nondeterministic variable rather than assuming a fixed value (the same conservative over-approximation argument as \S\ref{sec:io-inference}), so this case does not threaten soundness. We did not include this case as a benchmark requiring its own properties; \code{dimmer\_light\_control}'s properties (P1--P3) do not constrain \code{Light\_on\_state} beyond what the rung structure already enforces by construction.

\textbf{No path found to 21 additional real benchmarks.} An earlier goal for this evaluation was a corpus on the order of 25 graphical programs drawn from the Beremiz example library and the OpenPLC Tutorial series. Systematically scanning the available Beremiz example repository (227 XML files) and the OpenPLC examples repository found only one further file with a flat, resolvable top-level \gls{ld} network (\code{dimmer\_light\_control}, used as G3); the remainder either contain zero rungs (tutorial/test fixtures with no \gls{ld} content), or expose the action-nesting and function-block limitations documented above. Reaching a substantially larger real-vendor-sourced corpus would require either sourcing additional example repositories not available in this work's environment or extending the resolver to handle the nesting and function-block patterns identified here, and re-mining the existing corpora for newly resolvable programs.

\textbf{External validity.} The 3 evaluated programs come from CONTROLLINO/OpenPLC~Editor exports; coverage of Beremiz \gls{ide}, Siemens TIA Portal, CODESYS, and Rockwell Studio~5000 graphical exports has not been demonstrated by a clean, non-vacuous result (the two Beremiz files attempted are discussed above). These tools may use different connection graph conventions or additional XML attributes not present in the evaluated files. The heuristic I/O inference (Tier~2) has only been validated on programs that either use \%-addresses or have a clean contact-only/coil-only variable split; programs with shared variables (used in both contacts and coils) fall back to Tier~3 (internal state) and may require explicit address annotations.

\textbf{Construct validity.} The soundness theorem (Theorem~\ref{thm:soundness}) relies on Ebnenasir's formal \gls{ld} semantics~\cite{Ebnenasir2023}, which does not cover function blocks in graphical networks (the theorem applies to contact/coil elements only) -- consistent with the timer/trigger limitation discovered above. The \gls{dfs} completeness (Theorem~\ref{thm:completeness}) assumes the connection graph is a finite \gls{dag}; programs with feedback connections (which would create cycles in $G$) are not handled and would cause non-termination in the current \gls{dfs} implementation.

\textbf{Measurement validity.} Timing results are the median of three runs on an Apple M1 (8-core, \SI{8}{\giga\byte} RAM); absolute values will differ on other platforms. All 3 programs complete well below the 300-second timeout, so timeout effects do not confound classification results.

%=============================================================================
\section{Future Directions}
\label{sec:future}
%=============================================================================

\textbf{Preserving function-block semantics in rung paths.} The highest-priority next step is to stop silently dropping \code{TON}/\code{TOF}/\code{R\_TRIG}/\code{CTU} nodes from extracted rung expressions (\S\ref{sec:threats}). At a minimum, the resolver should detect when a path contains an unsupported function block and either explicitly reject the rung or model its output as an additional nondeterministic term, rather than producing a result that is sound only because the affected coil collapses to a constant. This is a precondition for re-admitting \code{beremiz\_traffic\_light} and similar timer-gated networks to the evaluated benchmark suite.

\textbf{Growing the real benchmark corpus.} With action-nested \gls{ld} discovery and the function-block limitation addressed, re-mining the Beremiz example library and OpenPLC Tutorial series for newly-resolvable programs is the most direct path to a larger, still fully real, evaluation corpus.

\textbf{Unsafe graphical benchmarks.} Introducing deliberately faulty graphical \gls{ld} programs -- analogous to the CS2--CS4 category in ESBMC-PLC -- would validate counterexample generation for the graphical frontend and demonstrate that the \gls{dfs} resolver correctly identifies violating scan cycles.

\textbf{Commercial vendor export testing.} Evaluating ESBMC-GraphPLC against graphical PLCopen XML exported by Siemens TIA Portal, CODESYS, and Rockwell Studio~5000 would establish the completeness of the resolver for industrial workflows.

\textbf{Feedback-connection handling.} Adding cycle detection to the \gls{dfs} -- and optional treatment of feedback edges as latched state variables -- would support programs with internal feedback loops, which are common in sequential control logic.

\textbf{Arduino \gls{plc} \gls{ide}.} The Arduino \gls{plc} \gls{ide} was identified as a potential source of graphical \gls{ld} programs but was not evaluated due to a hardware license requirement. Testing against Arduino \gls{plc} exports would extend vendor coverage.

\textbf{Integration with PLCverif.} As recommended in ESBMC-PLC, contributing the graphical \gls{ld} resolver as a PLCverif frontend would create a unified platform supporting both Siemens \gls{scl} (via PLCverif's existing path) and graphical PLCopen XML (via ESBMC-GraphPLC).

\textbf{K-\gls{ld} formal semantics.} A K-framework semantics for graphical \gls{ld} -- extending K-\gls{st}~\cite{Wang2023} -- would provide a machine-checked equivalence proof for the \gls{dfs}-based rung extraction, elevating Theorem~\ref{thm:soundness} from a proof-sketch to a formally verified result.

%=============================================================================
\section{Conclusion}
\label{sec:conclusion}
%=============================================================================

This paper presented ESBMC-GraphPLC, an extension of the ESBMC-PLC formal verifier that closes the graphical PLCopen XML support gap identified in ESBMC-PLC. The core contribution is a \gls{dfs}-based resolver that traverses the \code{localId}/\code{refLocalId} connection graph of graphical \gls{ld} programs, extracts rung paths as Boolean conjunctions, and applies three-tier I/O inference to classify variables as nondeterministic inputs or persistent outputs. A critical correctness requirement -- processing SET coils before RESET coils by following the \code{rightPowerRail} \code{connectionPointIn} sequence -- ensures that IEC~61131-3 latch semantics are preserved. The entire extension is confined to 274 lines in a single source file; the GOTO~\gls{ir} encoder, \textit{k}-induction engine, Z3 solver, and property encoder from ESBMC-PLC are unchanged.

The experimental evaluation on 3 graphical \gls{ld} programs from CONTROLLINO/OpenPLC examples demonstrated that ESBMC-GraphPLC produces a full GOTO~\gls{ir} for every program that previously yielded a vacuous empty-\gls{ir} proof. All 3 programs verify SAFE at $k=2$ in under \SI{70}{\milli\second}. The 11 textual \gls{ld} benchmarks from ESBMC-PLC that genuinely exist in textual format are fully preserved with zero regressions. Benchmark curation also surfaced two Beremiz example programs that this paper deliberately excludes from its results rather than report as vacuous or misleading successes: one contains no \gls{ld} content, and the other's only real rung network loses its timing semantics under the current algorithm, collapsing to a constant. We report both as discovered limitations (\S\ref{sec:threats}) and as the basis for the highest-priority future work (\S\ref{sec:future}).

The central finding is that graphical PLCopen XML -- the native export format of deployed \gls{plc} development environments -- can now be formally verified by ESBMC-PLC in a sound, non-vacuous manner for the class of programs evaluated here: contact/coil networks with SET/RESET latching, with or without \%-address-based I/O declarations. Automation engineers working with CONTROLLINO or OpenPLC Editor can export programs of this class without modification and obtain unbounded safety proofs in under a second; programs relying on timer- or trigger-gated rung paths require the function-block handling identified in \S\ref{sec:threats} as future work before the same guarantee applies.

%=============================================================================
\subsection*{Artifact Availability}
%=============================================================================

The complete artifact -- including all 3 graphical \gls{ld} benchmarks, YAML property files, the Python prototype converter, and the ESBMC-GraphPLC binary -- is permanently archived at Zenodo~\cite{DantasCordeiro2026graphical} (\doi{10.5281/zenodo.20699856}). The ESBMC-PLC artifact (11 textual benchmarks) is archived at \doi{10.5281/zenodo.20680071}~\cite{DantasCordeiro2026artefact}. The implementation is available as merged GitHub Pull Request~\#5374~\cite{ESBMCpr5374}.

\subsection*{Acknowledgements}
The authors thank the Department of Computer Science at the University of Manchester and the Systems and Software Security (S3) Research Group for their support. This work was partially funded by the Engineering and Physical Sciences Research Council (EPSRC) grants EP/T026995/1, EP/V000497/1, and EP/X037290/1, and by the Soteria project under the UK Research and Innovation Digital Security by Design program.

\bibliographystyle{unsrtnat}
\bibliography{references}

\end{document}